\newtheorem{theorem}{Theorem}
\newtheorem{definition}{Definition}
\definecolor{BrickRed}{RGB}{178,34,34}
\newcommand{\cmark}{\ding{51}}%
\newcommand{\xmark}{\ding{55}}%
\newif\if@restonecol  
\begin{document}
\title{EC-Chain: Cost-Effective Storage Solution for Permissionless Blockchains
\thanks{Corresponding author: Hechuan Guo (\href{mailto:hcguo@sdu.edu.cn}{hcguo@sdu.edu.cn}).}
\thanks{This study was supported by the National Key R\&D Program of China (No.2023YFB2703600), the National Natural Science Foundation of China (No. 62302266, 62232010, U23A20302), the Shandong Science Fund for Excellent Young Scholars (No.2023HWYQ-008), and the project ZR2022ZD02 supported by Shandong Provincial Natural Science Foundation.}
}


\author{
\IEEEauthorblockN{Minghui~Xu$^\dag$,
Hechuan Guo$^\ddagger$,
Ye Cheng$^\dag$,
Chunchi Liu$^\P$,
Dongxiao Yu$^\dag$,
Xiuzhen Cheng$^\dag$
}
\IEEEauthorblockA{$^\dag$ School of Computer Science and Technology, Shandong University}
\IEEEauthorblockA{$^\ddagger$ School of Information Science and Engineering, Shandong University}
\IEEEauthorblockA{$^\P$ Huawei Technologies Co., Ltd.}

}


\markboth{Journal of \LaTeX\ Class Files,~Vol.~14, No.~8, August~2015}%
{Journal of \LaTeX\ Class Files,~Vol.~14, No.~8, August~2015}

\IEEEtitleabstractindextext{%
\begin{abstract}
Permissionless blockchains face considerable challenges due to increasing storage demands, driven by the proliferation of Decentralized Applications (DApps). This paper introduces EC-Chain, a cost-effective storage solution for permissionless blockchains. EC-Chain reduces storage overheads of ledger and state data, which comprise blockchain data. For ledger data, EC-Chain refines existing erasure coding-based storage optimization techniques by incorporating batch encoding and height-based encoding. We also introduce an easy-to-implement dual-trie state management system that enhances state storage and retrieval through state expiry, mining, and creation procedures. To ensure data availability in permissionless environments, EC-Chain introduces a network maintenance scheme tailored for dynamism. Collectively, these contributions allow EC-Chain to provide an effective solution to the storage challenges faced by permissionless blockchains. Our evaluation demonstrates that EC-Chain can achieve a storage reduction of over \(90\%\) compared to native Ethereum Geth.
\end{abstract}

\begin{IEEEkeywords}
Blockchain, storage, state trie, erasure coding.
\end{IEEEkeywords}}

\maketitle

\IEEEdisplaynontitleabstractindextext
\IEEEpeerreviewmaketitle

\section{Introduction}
\label{sec:introduction}
Blockchain technology, esteemed for its capacity to uphold a secure and immutable distributed ledger among a network of untrusted nodes, is increasingly confronted with the issue of storage overhead. This problem is particularly pronounced with the rising prevalence of Decentralized Applications (DApps), such as blockchain games~\cite{min2019blockchain}, Decentralized Exchanges (DEXs)~\cite{lehar2021decentralized}, and Decentralized Finance (DeFi)~\cite{werner2022sok}. These applications produce substantial data volumes daily, intensifying storage demands. For example, the Ethereum network experiences an approximate daily data increase of 0.2 GB\footnote{https://www.statista.com} per node. This swift expansion in storage requirements poses a significant challenge for computers serving as blockchain storage nodes. As storage needs grow, the number of users maintaining nodes decreases, threatening the security and decentralization that are fundamental to permissionless blockchains~\cite{kim2021ethanos}.

Blockchain storage overhead arises primarily from two sources: ledger data and state data. Ledger data pertains to the immutable chain of blocks, each containing a set of transactions. State data, representing the current system state derived from past transactions, is generally organized in a tree structure to support fast verification. Notably, blockchain nodes seldom need to access the complete historical ledger. Instead, they frequently access the state data during transaction processing~\cite{dinh2018untangling}. As a typical example, the state data of Ethereum encompasses more than just account balances. It also includes smart contract code and storage, thereby enabling a broader range of functionalities. This added functionality, however, leads to increased storage requirements. To mitigate the problem of state data explosion, Ethereum implemented state pruning with the transition from version 1.11.5 to version 1.13.8\footnote{https://blog.ethereum.org/2023/09/12/geth-v1-13-0}. Despite this improvement, state data still constitutes a significant portion (33.29\%) of total storage and continues to expand rapidly. Current research efforts aim to enhance storage solutions~\cite{sforzin2022storage, pontiveros2018recycling, boneh2019batching, du2023partitionchain, qi2020reliable}, yet we still face challenges in effectively reducing blockchain storage redundancy.

\begin{figure}[!t]
\centering
\subfigure[Storage utilization]{
\label{fig:storage:usage}
\includegraphics[width=0.48\linewidth]{./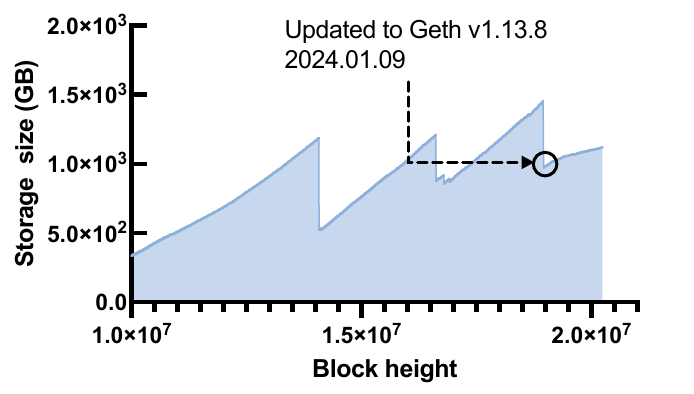}}
\subfigure[Storage utilization ratio]{
\label{fig:storage:ratio}
\includegraphics[width=0.48\linewidth]{./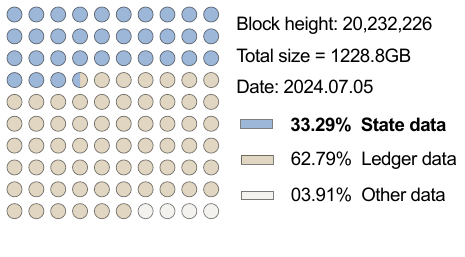}}
\caption{Storage metrics of Ethereum geth node}
\label{fig:storage}
\end{figure}

To address the challenges associated with state data, stateless blockchains have been developed, providing succinct and verifiable state proofs and mitigating the need for complex state management~\cite{boneh2019batching, srinivasan2021hyperproofs, dryja2019utreexo, bailey2021merkle, verkletree}. However, this requires off-chain state storage and frequent witness updates, which can compromise on-chain data availability and impose significant burdens on users. Moreover, stateless approaches require significant modifications to the existing state storage architecture, which is not implementation-friendly. Therefore, altering the current state-trie architecture is another promising solution, but mature implementations are still limited. This is because state data is characterized by frequent access and high availability demands, making it challenging to directly apply existing storage optimization techniques such as pruning~\cite{nakamoto2008bitcoin, sforzin2022storage}, compression~\cite{pontiveros2018recycling, kim2019scc, sforzin2022storage}, and erasure coding~\cite{qi2020reliable, du2023partitionchain, li2021lightweight}. Moreover, the dynamic nature of permissionless networks, characterized by frequent node arrivals and departures, further exacerbates data availability issues. The large scale of permissionless networks also adds extra complexity to storage optimizations.

EC-Chain specifically addresses these critical challenges in permissionless blockchains. By integrating optimized encoding schemes, and an innovative dual-trie state management system, EC-Chain reduces storage overhead for both ledger and state data. We also propose network maintenance to adapt to the dynamism of permissionless networks. Our key contributions include the following:

\begin{enumerate}
\item EC-Chain decreases storage costs for permissionless blockchains by applying erasure coding (EC) to both ledger and state data. In the context of ledger data, the system incorporates batch encoding and height-based encoding methods to improve upon current EC-based blockchain storage optimization techniques. 

\item To effectively handle the expanding volume of state data, EC-Chain implements an innovative dual-trie state management system. This system divides the original state trie into two distinct hot and cold tries to facilitate low redundancy in storage and optimize state retrieval processes. Additionally, we introduce novel techniques including state expiry, mining, and creation, which collectively contribute to the efficient management of state data.

\item EC-Chain utilizes a trie-splitting method to enhance the extraction of information from encoded segments. This technique enables the efficient retrieval of specific data without the need to recover the entire cold trie. Importantly, the integrity of the data is preserved and remains verifiable after encoding. 

\item The EC-Chain system incorporates a network maintenance scheme designed to address the dynamic nature of permissionless networks. This offers adaptable mechanisms for group upgrades and downgrades, as well as chunk updates, to improve data availability regarding frequent node arrivals and departures.
\end{enumerate}



\begin{table*}[!ht]
\centering
\begin{threeparttable}
\caption{A Comparison of Blockchain Storage Reduction Techniques}
\label{table:comparision}
\tabcolsep=0.3cm
\begin{tabular}{l|c|c|c|>{\centering}p{1.5cm}|>{\centering}p{1.5cm}|c}
\toprule[1pt]
\multirow{2}{*}{\ } & \multirow{2}{*}{\textbf{Method}} & \multirow{2}{*}{\makecell{\textbf{Permissionless} \\ \textbf{(Dynamism)}}} & \multirow{2}{*}{\textbf{Smart Contract}} & \multicolumn{2}{c|}{\centering\textbf{Storage Optimization of}} & \multirow{2}{*}{\makecell{\textbf{On-Chain Data} \\ \textbf{Availability}}} \\
 & & & & \textbf{Ledger} & \textbf{State$^\dag$} & \\
\midrule[0.5pt]
SPV \cite{nakamoto2008bitcoin} & \multirow{2}{*}{Pruning} & \cmark & \xmark & \cmark & \cmark & \LEFTcircle \\ 
MINIMIZE \cite{sforzin2022storage} & & \cmark & \xmark & \cmark & \cmark & \LEFTcircle \\ 
\midrule
Pontiveros \textit{et al.} \cite{pontiveros2018recycling} & \multirow{3}{*}{Compression} & \cmark & \cmark & \xmark & \cmark & \CIRCLE \\
SCC \cite{kim2019scc} & & \xmark & \xmark & \cmark & \xmark & \CIRCLE \\
SLACK \cite{sforzin2022storage} & & \cmark & \xmark & \cmark & \xmark & \CIRCLE \\ 
\midrule
Boneh \textit{et al.} \cite{boneh2019batching} & \multirow{4}{*}{Stateless} & \cmark & \xmark & \xmark & \cmark & \Circle \\ 
MiniChain \cite{chen2020minichain} & & \cmark & \xmark & \xmark & \cmark & \Circle \\ 
SlimChain \cite{xu2021slimchain} & & \cmark & \cmark & \xmark & \cmark & \Circle \\ 
Verkle Tree \cite{verkletree} & & \cmark & \cmark & \xmark & \cmark & \Circle \\
\midrule
BFT-Store \cite{qi2020reliable} & \multirow{3}{*}{EC} & \xmark & \xmark & \cmark & \xmark & \CIRCLE \\ 
PartitionChain \cite{du2023partitionchain} & & \xmark & \xmark & \cmark & \xmark & \CIRCLE \\
Li \textit{et al.} \cite{li2021lightweight} & & \xmark & \xmark & \cmark & \xmark & \CIRCLE \\
\midrule
\textbf{EC-Chain} & State Expiry + EC & \cmark & \cmark & \cmark & \cmark & \CIRCLE \\
\bottomrule[1pt]
\end{tabular}
\begin{tablenotes}
    \item[$\dag$] UTXO or state trie
\end{tablenotes}
\end{threeparttable}
\end{table*}

\section{Related Work}
\label{sec:related:work}
There are four types of approaches to save blockchain storage: pruning, compression, stateless, and erasure coding.

\subsection{Pruning and Compression}
Pruning is a straightforward approach to reduce storage by removing unnecessary data. Both ledger and state data can be pruned. The earliest pruning strategy for blockchain is Simplified Payment Verification (SPV) \cite{nakamoto2008bitcoin}, which requires clients to store only block headers for the fast verification of new blocks. However, its reliance on full nodes and limited accessibility to transaction data can lead to vulnerabilities \cite{faridi2020improving}. For state data, recent research \cite{sforzin2022storage} demonstrates the feasibility of pruning UTXOs in Bitcoin based on the likelihood of them being spent again.
Several compression techniques exist to tackle blockchain storage inefficiency. SCC \cite{kim2019scc} reduces ledger size by merging blocks and is adopted by resource-limited IoT devices. Pontiveros et al. \cite{pontiveros2018recycling} target on comprising smart contract code. Sforzin et al. \cite{sforzin2022storage} propose the MINIMIZE method for Bitcoin, focusing on storing only essential data for unspent transactions, saving storage without sacrificing functionality.

\subsection{Stateless Blockchain and Erasure Coding}
Stateless blockchains aim to maintain a succinct and verifiable proof of states for verification without complicated state maintenance, leveraging techniques Vector Commitment (VC) \cite{boneh2019batching, srinivasan2021hyperproofs, chen2020minichain, xu2021slimchain} and Merkle trees \cite{dryja2019utreexo, bailey2021merkle, chepurnoy2018edrax, verkletree}. Boneh et al. \cite{boneh2019batching} proposed a distributed accumulator with batching for short UTXO commitments, while Hyperproofs \cite{srinivasan2021hyperproofs} is the first maintainable and aggregatable VC scheme. MiniChain \cite{chen2020minichain} introduces the STXO model to reduce storage costs and proving time, and SlimChain \cite{xu2021slimchain} saves on-chain storage by keeping transaction hashes and a Merkle state trie root in blocks. Utreexo \cite{dryja2019utreexo} is a hash-based dynamic accumulator that arranges a UTXO set into a binary Merkle forest, with bridge nodes storing the entire Merkle forest and providing proofs to compact state nodes, which only store the tree roots to save storage. Bailey and Sankagiri \cite{bailey2021merkle} co-locate recent UTXOs in the tree, reducing proof size compared to Utreexo. EDRAX \cite{chepurnoy2018edrax} accelerates stateless blockchains by implementing sparse Merkle trees in the UTXO model and a distributed vector commitment in the account model. The Verkle tree \cite{kuszmaul2019verkle}, combining VC and Merkle trees, is included in Ethereum's roadmap to enhance scalability and sustainability.

Erasure coding, traditionally used for fault-tolerant storage, has gained traction in blockchains as a storage-saving technique. Qi et al. \cite{qi2020reliable} reduce per-block storage overhead from $O(N)$ (proportional to the number of nodes) to a constant $O(1)$, while Du et al. \cite{du2023partitionchain} introduce PartitionChain to minimize the computational cost of encoding and decoding. Additionally, Li et al. \cite{li2021lightweight} address efficient block recovery for resource-constrained nodes, showcasing erasure coding's potential for scalable and efficient blockchain storage.

\subsection{Limitations of Current Approaches}
As illustrated by TABLE \ref{table:comparision}, the current approaches have contributed to storage reduction, but still face sevel limitations:
1) Pruning and Compression: Pruning involves data removal, compromising data availability. Compression techniques though can be lossless, yield only marginal performance enhancements. For example, SLACK \cite{sforzin2022storage} surpasses standard compression methods (gzip, zstd, lzma) in Bitcoin storage, but the maximum achievable savings are limited to 28.62\%.
2) Stateless blockchain: While this method allows validators to store only a constant-size state, it comes with a trade-off. Stateless blockchains don't eliminate the need for state data entirely. Instead, they shift the burden from validators to users. This method requires users to store state data off-chain and frequently update their witnesses, which can negatively impact the availability of on-chain data and cause frequent user interactions\footnote{The number of users affected scales linearly with the total number of transactions on the network.}. More importantly, designing a stateless system can result in significant changes to the current state management system, making implementation challenging.
3) Erasure Coding: This technique reduces data redundancy. However, existing implementations are limited to handling ledger data and do not apply to state data, thus not supporting smart contract-enabled blockchains. Moreover, current approaches lack the capability of adapting to the typical dynamic conditions of permissionless blockchains.

\section{Preliminary and Network Setting}

\subsection{Preliminary}
Here, we present key concepts used in EC-Chain, including blockchain storage, erasure coding, and distributed hash table.

\subsubsection{Blockchain Storage}
A blockchain system mainly contains two types of data: ledger data and state data. In essence, the ledger data provides a complete and tamper-proof record of all blocks, while the state data offers a more efficient way to access and manage the system state based on ledger history. Ledger data are typically implemented using a chain of blocks, where each block contains a set of transactions and a reference (hash) to the previous block. 
Unlike the ledger, the state data is typically implemented with a state trie and maintains the state of every account. We can think of it as an index for the ledger, allowing for fast lookups of the current system state. The state trie is a fundamental data structure underpinning many practical blockchains, with Ethereum being a prime example. Ethereum employs a Merkle Patricia Trie, a special kind of tree structure that leverages hashes for efficient data storage and verification.
%

\subsubsection{Erasure Coding (EC)}
In EC-Chain, we leverage the Reed-Solomon (RS) coding scheme, first described by Reed and Solomon in 1960 \cite{rscode}. This scheme works by manipulating data in units called chunks, often several megabytes (MB) in size. An $(k,m)$-$\mathsf{RS}$ erasure coding scheme allows us to recover original data as long as we have any $k$ chunks from a set of $k + m$ chunks. Specifically, a $(k,m)$-$\mathsf{RS}$ scheme encodes $k$ equal-sized data chunks (denoted as $d_1, \cdots, d_k$) to generate $m$ redundant parity chunks (denoted as $p_1, \cdots, p_m$). These $k$ original data chunks and the $m$ parity chunks together form a \textit{strip}. For convenience, we designate the encoding and decoding functions as $\mathsf{RS.Encode}()$ and $\mathsf{RS.Decode}()$, respectively.


\subsubsection{Distributed Hash Table (DHT)}
A DHT acts as a useful tool for efficiently distributing and retrieving data across a distributed network. Within a DHT, the function $\mathsf{GetClosestPeers}(h)$ identifies the precise node IP addresses that store a data item associated with a specific hash $h$, and $\mathsf{Get}(h)$ retrieves the data fingerprinted by $h$. The Kademlia protocol \cite{maymounkov2002kademlia}, introduced in 2002, is a well-regarded DHT protocol recognized for its efficient routing algorithm and resilience in extensive networks. Therefore, this study utilizes the production implementation of Kademlia provided by Protocol Labs.

\subsection{Network Setting}
In EC-Chain, we consider a network consisting of blockchain storage nodes, referred to simply as nodes within this paper. The network size, represented by $N$, is unkown in permissionless blockchains. The nodes self-organize into groups denoted by $G=\{g_1, g_2, \cdots, g_{\mathcal{K}}\}$. Each group $g_i$ has a size $|g_i|$, and the total network size is given by $N=\sum_{i=1}^{\mathcal{K}} |g_i|$. Each group $g$ is sized $2^t$ for $t \geq 2$. Therefore we also use $g^{(2^t)}$ to denote a group of size $2^t$, with the smallest group comprising four nodes. We set $k = m = |g|/2$. Specifically, for a group of size $2^t$, a $(2^{t-1}, 2^{t-1})$-$\mathsf{RS}$ scheme is employed, enabling data recovery as long as half of the chunks are available. In the subsequent section, we will demonstrate that this configuration is conducive to storage optimization and dynamic network maintenance. 








\section{EC-Chain Design}
\label{sec:design}
EC-Chain reduces storage overheads in permissionless blockchains through a multifaceted approach. Using erasure coding, EC-Chain encodes and distributes ledger and state data between collaborating node groups, significantly reducing individual node storage requirements. Data verifiability is ensured by employing a Distributed Hash Table (DHT), as detailed in Section~\ref{ss:dht_vs_ts}.  For ledger data, optimized encoding strategies like batch encoding and height-based encoding are utilized (Section~\ref{ss:ecledger}). State data benefits from a dual-trie state management system, described in Section~\ref{ss:ectrie}. The fully replicated hot trie facilitates rapid retrieval of frequently accessed data. Less frequently accessed data resides in the cold trie, where erasure coding minimizes storage space. This dynamic system adjusts data placement within the tries based on access patterns, optimizing storage utilization. Additionally, a network maintenance scheme manages node arrivals and departures and incentivizes nodes to merge into larger groups, as Section~\ref{ss:maintenance} illustrates. Overall, this EC-Chain design ensures low storage overheads and high data availability in dynamic permissionless networks.

\subsection{Distributed Hash Table for Data Verifiability}
\label{ss:dht_vs_ts}
Erasure coding aids in decreasing storage costs in blockchain databases by allowing nodes to encode and distribute data fragments in a fault-tolerant manner. To ensure the public verifiability of distributed data in untrusted environments, metadata concerning the data and parity chunks can be managed using either a distributed hash table (DHT) or a threshold signature (TS). Our subsequent analysis demonstrates that, in permissionless blockchain environments (such as EC-Chain), DHTs offer more benefits over TS schemes for two main reasons:

First, establishing a DHT is simpler than setting up a TS. The initialization process involves assigning unique identifiers to nodes, establishing connections to bootstrap nodes, and updating routing tables to optimize data distribution. This approach scales effectively. Establishing a TS group requires a more intricate and time-consuming process compared to DHTs. It involves generating and securely distributing private key shares among known participations. Furthermore, TS schemes exhibit scalability limitations, struggling to efficiently handle large-scale node networks. It takes about 150s to establish 64 nodes for pairing-based threshold cryptosystems \cite{9833584}. TS schemes are inherently more suitable for permissioned blockchains characterized by a known and small set of nodes. 

Secondly, permissionless blockchains exhibit dynamic characteristics, with nodes frequently arriving and departing the network. DHTs excel in such scenarios due to their simple update mechanisms. New nodes seamlessly join by connecting to existing members, updating their routing tables, and redistributing key-value pairs based on their unique identifiers. This ensures scalable performance when distributing data and routing throughout the network. Departing nodes can either notify their neighbors or be identified through health checks to maintain data availability. In contrast, node arrivals or departures within a TS scheme require a complex proactive secret sharing scheme \cite{maram2019churp, baron2015communication}. This entails recalculating threshold parameters, securely distributing new key shares, and updating the public key. This process also requires consensus among remaining nodes to uphold security. 

DHT and TS schemes both require storing proofs of chunks for verification. A notable concern regarding DHT might be the storage overhead. Nevertheless, our estimation suggests that the storage requirement for DHT, even when applied to a large-scale system like Ethereum (approximately 20 million blocks), is estimated to be around 3 GB. This represents only 0.3\% of Ethereum's total storage capacity, indicating that the overhead is relatively minimal.

\subsection{Ledger Encoding}
\label{ss:ecledger}
This paper proposes two advanced encoding strategies: batch encoding and height-based encoding, to optimize processing efficiency and data availability, respectively.

\begin{figure}[!htbp]
    \centering
    \includegraphics[width=0.8\linewidth]{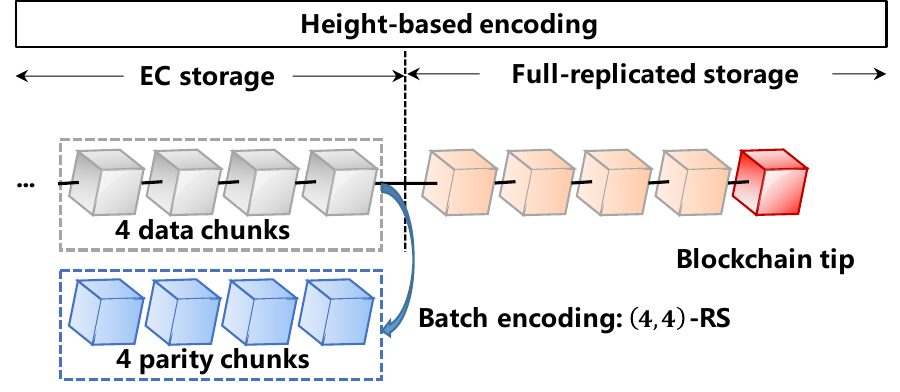}
    \caption{Ledger encoding: an example of applying batch encoding and height-based encoding using $(4,4)$-$\mathsf{RS}$.}
    \label{fig:ledgerencoding}
\end{figure} 

\subsubsection{Batch Encoding}
We can adopt a baseline approach that encodes one block at a time. Specifically, consider $k = m = |g|/2$ and let nodes encode one block. After encoding, a node $p_i$ has a strip of $(k+m)$ chunks denoted $c_1, \cdots, c_{|g|}$, then $p_i$ only keeps the chunk $c_i$ and abandons the other chunks in the same strip. This is a basic instantiation of existing ledger encoding methods \cite{qi2020reliable, du2023partitionchain, li2021lightweight}. While the baseline approach processes blocks correctly, this one-block-at-a-time approach becomes a bottleneck when handling a backlog of blocks requiring encoding. 
To address this limitation, we propose batch encoding as a more effective solution. At the onset of a new encoding process, a leader is selected through the verifiable random function (VRF)~\cite{micali1999verifiable}. This leader is responsible for specifying the height of all blocks requiring encoding. The group encodes $k$ consecutive blocks as a single batch and encodes from the genesis block up to the specified height. For instance, with a group $g^{(4)}(k=2, m=2)$, if the leader designates blocks (1-4) for encoding, the group encodes two separate batches: blocks (1-2) and blocks (3-4). We select consecutive blocks for encoding to facilitate convenient block verification since encoding non-consecutive blocks would result in additional overhead, as missing intermediate blocks would lead to extra batch recovery efforts. Another benefit of batch encoding is that it enables parallel encoding across these multiple batches, thereby enhancing the encoding speed.

\subsubsection{Height-based Encoding} Encoding all blocks without careful consideration can unintentionally undermine data availability. This issue arises because blocks situated near the blockchain tip are prone to be accessed, leading to increased read latency when recovering these blocks frequently. To address this problem, we propose the height-based encoding strategy for the ledger encoding. The basis for this strategy is the observation that typical blockchain access patterns—such as block synchronization \cite{hu2020sync}, transaction validation \cite{chepurnoy2018edrax}, and blockchain exploration \cite{kuzuno2017blockchain}—primarily concentrate on blocks near the current blockchain tip. Therefore, our strategy employs erasure coding for historical blocks that are considerably older than the current blockchain tip. For example, erasure coding can be applied exclusively to blocks that are more than a predefined 10,000 blocks behind the tip. This ensures that only a small fraction of the ledger (e.g., approximately 0.1\%) requires full replication, thereby facilitating the efficient processing of the majority of block requests without incurring large retrieval costs.

\subsection{State Encoding}
\label{ss:ectrie}
In addition to ledger data, state data contains crucial information including account information, contract code, and consensus data for secure and decentralized transaction processing. However, the size of state data grows qucikly, reaching up to 400 GB in the case of Ethereum. To address this issue, we propose a dual-trie state management system, which modifies the existing state trie structure to reduce storage overheads and preserve high data availability.

\begin{figure}[!th]
\centering
\includegraphics[width=0.8\linewidth]{./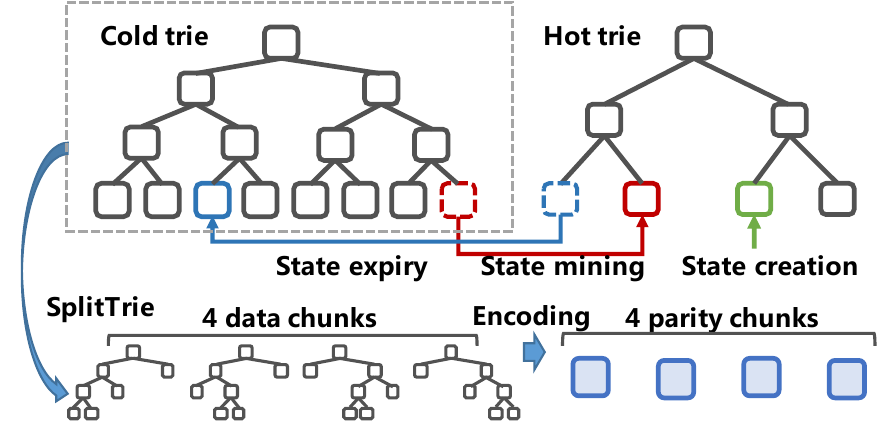}
\caption{The dual-trie state management system and cold trie encoding.}
\label{fig:trie}
\end{figure}

\subsubsection{Dual-Trie State Management} 
We propose an innovative dual-trie state management system that reconciles the trade-offs between storage overheads and data availability inherent in applying erasure coding. This system divides a state trie into two tries: a \textit{hot trie} and a \textit{cold trie}. The hot trie is designed for maintaining frequently accessed states and is fully replicated to prioritize retrieval speed. In contrast, the cold trie employs erasure coding to fragment and distribute less frequently accessed states, thereby optimizing storage utilization. Effective management of these two tries is crucial for maintaining system safety. Hence we design three elaborate procedures: 1) State expiry, which governs the transition of inactive states from the hot trie to the cold trie. 2) State mining, which manages the transfer of states from the cold trie to the hot trie when states become active. 3) State creation, which deals with the establishment of new accounts.

\textit{State Expiry.} 
EC-Chain carefully differentiates between hot and cold states because encoding all states is inefficient and impractical. In the blockchain, blocks are sequentially ordered, typically using a timestamp server approach~\cite{nakamoto2008bitcoin}. This conventional method facilitates a basic categorization of hot and cold states according to the block heights at which the states are generated. However, this method lacks precision, as it relies solely on timestamps and overlooks pertinent blockchain data. For example, an older block might include a frequently accessed state containing a smart contract for a popular NFT application.

\begin{algorithm}[!t]
\caption{The Dual-Trie State Management}\label{alg:2trie}
\DontPrintSemicolon
\SetKw{th}{$\mathcal{T}_\mathsf{hot}$}
\SetKw{tc}{$\mathcal{T}_\mathsf{cold}$}
\SetKw{addr}{$\mathsf{addr}$}
\SetKw{lastaccess}{$\mathsf{lastAccess}$}
\SetKw{accesstime}{$\mathsf{accessTime}$}
\SetKw{expiryheight}{$\mathsf{expiryHeight}$}
\SetKw{addrtoexpire}{$\mathsf{addrToExpire}$}
\SetKw{creationheight}{$\mathsf{creationHeight}$}
\SetKw{timer}{$\mathsf{timer}$}
\KwIn{Hot trie \th, cold trie \tc, new block $b_h$ at height $h$, recency threshold $\Delta T$, frequency threshold $F$, and three maps \accesstime, \creationheight and \timer}
\KwOut{Updated \th and \tc}
\BlankLine
\For{each address \addr to be accessed in $b_h$}{
\If{$\addr \notin \th$}{
\If{$\addr \in \tc$}{
\textcolor{BlueViolet}{// State Mining}\\
Move \addr from \tc to \th\\
}\Else{
\textcolor{BlueViolet}{// State Creation}\\
Add \addr to \th\\
$\creationheight[\addr] \leftarrow h$\\
}
}
Access \addr and $\accesstime[\addr]$++\\
$\expiryheight \leftarrow \max(h + \Delta T, \creationheight[\addr]+\lceil \accesstime[\addr]/F \rceil)$\\
$\timer[\addr] \leftarrow \expiryheight$\\
}
\For{each $\addr$ that $\timer[\addr] = h$}{
\textcolor{BlueViolet}{// State Expiry}\;
Move \addr from \th to \tc\;
}
\end{algorithm}

To address this, we consider two key factors: frequency and recency, for identifying cold/hot accounts. Frequency pertains to the rate at which a state is accessed over a defined period measured by the block height. This method is based on the observation that states with infrequent access are unlikely to be accessed again. However, newly generated states with low access frequency are still likely to experience access in the future. Therefore, recency, which is the distance in height from the last accessed block to the blockchain tip, also becomes a critical factor. By combining these two factors, the system can accurately distinguish cold and hot states. This approach ensures that the majority of state accesses are handled by querying the hot trie, thereby minimizing the need to access the cold trie. A trade-off exists between storage costs and efficiency when determining the thresholds of frequency and recency: $F$ and $\Delta T$. Increasing $F$ and decreasing $\Delta T$ results in more accounts being transferred to the cold trie for erasure coding, which reduces storage costs but increases transaction execution latency due to more frequent cold trie access. The method of organizing state trie is a deterministic algorithm so honest nodes can maintain a consistent view of dual-trie states.

\textit{State Mining and State Creation.} States within the cold trie are less frequently accessed directly, though the possibility of such access cannot be ignored. If a state from the cold trie is accessed, the state is moved to the hot trie as its recency falls below $\Delta T$, and then the proof of state is generated from the hot trie. This procedure, referred to as state mining, lets nodes transit the active states from the cold trie to the hot trie. The creation of a new state needs a blockchain consensus process, which requires Merkle proofs from both tries to verify that the state did not previously exist. This verification step is important to prevent account collisions. Once validated, the new state is inserted into the hot trie, and its frequency and recency are tracked from the block height at which it is created.

\begin{algorithm}[!t]
  \DontPrintSemicolon
  \SetKwFunction{splittrie}{$\mathsf{SplitTrie}$}
  \SetKw{st}{$\mathsf{st}$}
  \SetKw{rt}{$\mathsf{rt}$}
  \SetKw{mp}{$\mathsf{mp}$}
  \SetKw{root}{$\mathsf{root}$}
  \SetKw{subtrie}{$\mathsf{subtrie}$}
  \SetKw{subtrieroots}{$\mathsf{subtrieRoots}$}
  \SetKw{merklepath}{$\mathsf{merklePath}$}
  
  \textcolor{BlueViolet}{// \splittrie}\\
  \KwIn{State trie $\mathcal{T}$ with \rt as the root, and an integer $k$}
  \KwOut{Data chunks $\{d_1, d_2, \cdots, d_k\}$}
  \subtrieroots = $\{\rt\}$\\
  \While{$|\subtrieroots| < k$}{
  \For{each $\root \in \subtrieroots$}{
  Remove \root from \subtrieroots\\
  Add all children of \root to \subtrieroots\\
  }
  }
  \For{each $\root_i \in \subtrieroots$}{
  $\subtrie_i \leftarrow$ subtrie of $\mathcal{T}$ rooted at $\root_i$\\
  $\merklepath_i \leftarrow$ Merkle path between \rt and $\root_i$\\
  $d_i \leftarrow \subtrie_i \| \merklepath_i$\\
  }
  
  \textcolor{BlueViolet}{// Encoding}\\
  \KwIn{State trie $\mathcal{T}$, and group $g_i$}
  $k \leftarrow m \leftarrow |g_i|/2$\\
  $\mathcal{ST} \leftarrow \{\st_1, \st_2, \cdots, \st_k\} \leftarrow \splittrie(\mathcal{T}, k)$\\
  $\mathcal{C} \leftarrow \{c_1, c_2, \cdots, c_{k+m}\} \leftarrow \mathsf{RS.Encode}(\mathcal{ST}, k, m)$\\
  \For{$c_j \in \mathcal{C}$}{
    $h_j \leftarrow H(c_j)$\\
    \If{$i \in \mathsf{DHT}.\mathsf{GetClosestPeers}(h_j)$}{
    store $c_j$\\
    }
    $\mathcal{H} \leftarrow \mathcal{H} \cup \{h_j\}$\\
  }
  Store $\mathcal{H}$\\
  
  \textcolor{BlueViolet}{// Decoding}\\
  \KwIn{hashes of all chunks $\mathcal{H}$}
  \For{$h_j \in \mathcal{H}$}{
    $c_j \leftarrow \mathsf{DHT}.\mathsf{Get}(h_j)$, $\mathcal{C} \leftarrow \mathcal{C} \cup c_j$\\
    \If{$|\mathcal{C}| \ge k$}{
    $T \leftarrow \mathsf{RS.Decode}(\mathcal{C})$\\
    }
  }
  \caption{Cold Trie Encoding}
  \label{alg:ectrie}
\end{algorithm}

\textit{State Transitions in the Dual-Trie System.} 
For ease of understanding, we outline the complete state transition process in Algorithm~\ref{alg:2trie}. During transaction processing, the system verifies a state by querying the hot trie, the cold trie, or treating it as non-existent. When a state from the cold trie is accessed, it is transferred to the hot trie through state mining to support faster future retrieval (lines 3-5). For a new account, the system creates and inserts the state directly into the hot trie (lines 7-9). Any state access results in an update to the state's metadata for expiry tracking. The expiry estimation uses a block height $h$ at which the state is accessed and a predefined recency threshold $\Delta T$. Considering recency, the system determines that the state should expire at $h + \Delta T$, given no intervening accesses. Considering access frequency, the system maintains two values: $\mathsf{creationHeight}[\mathsf{addr}]$, which records the block height of the state's creation, and $\mathsf{accessTime}[\mathsf{addr}]$, which accumulates the total access count. Using these values and the access frequency threshold $F$, the system calculates the expected block height when the access frequency falls below the threshold: $\mathsf{creationHeight}[\mathsf{addr}]+\lceil\mathsf{accessTime}[\mathsf{addr}]/F\rceil$. This calculated value, along with the recency consideration, determines the final expiry timer for the address (lines 10-12). After processing all transactions within a block, the system iterates through states and expires those whose expiry timers coincide with the current block height (lines 13-15).

\subsubsection{Cold Trie Encoding} 


As a subsequent step, the cold trie can be erasure coded to save storage as Algorithm~\ref{alg:ectrie} depicts. If the cold trie is directly serialized into a byte array and then divided into chunks, responding to each query requires the recovery of the entire cold trie, which is time-consuming. We observe that tries employ constant-sized hashes and are typically balanced. For instance, the sizes of 16 subtries, each rooted at one of the sixteen children of the MPT root are almost equal as illustrated in Fig.~\ref{fig:balance}.
To achieve data availability, we design the function $\mathsf{SplitTrie}$ (lines 1-10) to partition the cold trie into multiple balanced subtries. These subtries serve as data chunks and are used as inputs into the encoding function. Given that $k=|g|/2=2^{t-1}$ and the cold trie is balanced, the cold trie can be segmented into $k$ subtries of approximately equal size. This process ensures that each data chunk remains operational despite that the entire cold trie is encoded. To enable a node storing a data chunk to verify its origin from the complete cold trie, the Merkle path from the subtrie's root to the root of the cold trie is included as proof with the data chunk (lines 9-10). Finally, a $(k,m)$-$\mathsf{RS}$ code is applied to generate $m$ parity chunks, with the total $k+m$ chunks as a strip (lines 11-20).

\begin{figure}[!htbp]
    \centering
    \includegraphics[width=0.8\linewidth]{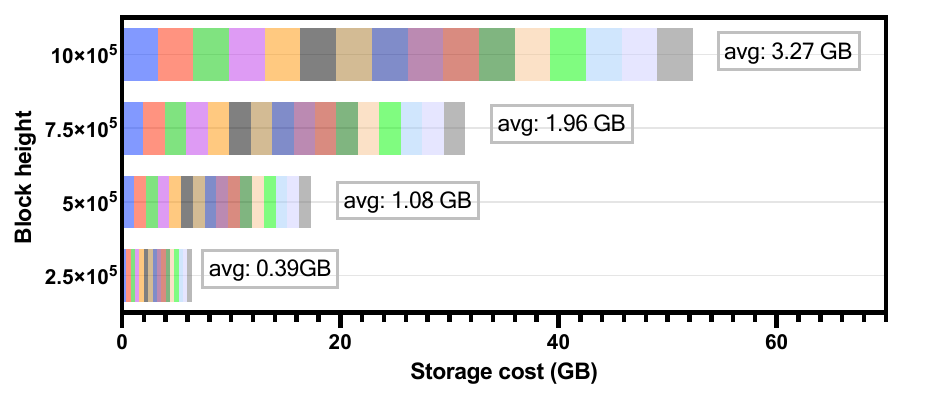}
    \caption{The sizes of 16 subtries (represented by 16 distinct colors) of the Merkle Patricia Trie (MPT) within the initial one million blocks of Ethereum.}
    \label{fig:balance}
\end{figure}

\subsection{Network Maintenance for Dynamic Networks}
\label{ss:maintenance}
The dynamic nature of node participation in permissionless networks presents a significant challenge in reducing storage. Frequent node arrivals and departures require computationally intensive EC re-encoding, hindering network efficiency and data availability. To mitigate this, we propose a novel network maintenance scheme. As detailed in Sections~\ref{ss:ecledger} and \ref{ss:ectrie}, larger EC groups require less storage space at each node. This inherent characteristic incentivizes nodes to join larger groups, thereby minimizing their individual storage consumption. However, larger groups increase the risk of unreliable storage. Therefore, we design a new simple rule: larger groups must be formed by merging two smaller ones. This rule ensures that a node must remain within the network for a prolonged period to become part of a larger group, effectively mitigating the risk associated with short-lived participants.

\textit{Group Upgrade and Downgrade.}
The network maintenance protocol operates as follows: When two groups of identical size are identified, they can merge into a larger single group, enjoying the lower redundancy. Based on the network setting, the number of groups is directly proportional to the number of `1' bits in the binary representation of $N$. For example, a network with 136 nodes would finally be organized into two stable groups: one large group $g^{(128)}$ (encoded using $(64, 64)$-$\mathsf{RS}$) and another smaller group $g^{(8)}$ (encoded using $(4, 4)$-$\mathsf{RS}$). This arrangement reflects the binary representation of 136, which is $10001000$. Given that the storage costs for maintaining the blockchain in a group scale with the group's size, the overall storage costs for the network show a linear relationship with the number of groups. This number can range from an optimal scenario of 1 to a worst-case scenario of $\log(N)$, with an average of $\log(N)/2$. The analysis is elaborated in Section~\ref{sec:analysis}.

If a node leaves the network or encounters a malfunction, two scenarios arise: temporary absence and prolonged absence. A node may be temporarily offline with the expectation of returning soon. During this period, the group can continue its operations without disruption, as the redundancy provided by the EC scheme ensures continued access to blockchain data. However, if the number of active nodes drops below a critical threshold, the network would be vulnerable to Byzantine attacks. Therefore, when the rate of node departure in a group exceeds 1/4 of its total members, a downgrade process is initiated to maintain the group's resilience against up to 1/3 of nodes potentially exhibiting Byzantine behavior. To downgrade a group that has lost a quarter of its nodes, the remaining 3/4 of members are reorganized into two smaller groups. One group will consist of 1/4 of the remaining nodes, and the other will include 1/2. This downgrade ensures that both new groups adhere to a size $2^t$, facilitating smooth future adjustments.

\textit{Chunk Update.} 
Few investigations have addressed the dynamics of chunk updates during data loss and recovery. When a group upgrade or downgrade occurs, the blockchain data within these groups must be modified to conform to the new configuration. Specifically, the transition from a $(k, m)$-$\mathsf{RS}$ scheme to a $(2k, 2m)$-$\mathsf{RS}$ scheme requires transforming the stored chunks, involving merging adjacent $k$-block batches into $2k$-block batches. Consider two groups, $g_1$ and $g_2$, using a $(k, m)$-$\mathsf{RS}$ scheme, which merge to form a new group $g_3$ under a $(2k, 2m)$-$\mathsf{RS}$ scheme. The nodes $g_1^k$ and $g_2^k$ hold the original data chunks, while the nodes $g_1^m$ and $g_2^m$ store the corresponding parity chunks. In EC-Chain, data blocks are organized into batches denoted as $(B_1, B_2, \cdots)$, with each batch consisting of $2k$ blocks. For data chunk update, $g_1^k$ retains only odd-numbered batches, while $g_2^k$ retains only even-numbered batches. Consequently, upon merging into $g_3$, $g_1^k$ and $g_2^k$ collectively possess the encoded $2k$-block batches, obviating the need for data chunk redistribution. Subsequently, nodes $g_1^m$ and $g_2^m$ can request the necessary data chunks from $g_1^k$ and $g_2^k$ to generate the parity chunks using $(2k, 2m)$-$\mathsf{RS}$.

The transition from a $(k, m)$-$\mathsf{RS}$ scheme to a $(2k, 2m)$-$\mathsf{RS}$ scheme for state data requires a split operation on the subtries. Since $g_1^k$ and $g_2^k$ share the same encoded cold trie, we can consider the node pairs $(p_1, p_2)$ where $p_1 \in g_1^k$ and $p_2 \in g_2^k$ handle the same subtrie. Each node within such a pair independently divides its subtrie into two balanced subtries, denoted as $\mathcal{T}_1$ and $\mathcal{T}_2$. Thereafter, $p_1$ retains $\mathcal{T}_1$ and $p_2$ retains $\mathcal{T}_2$. Finally, both $g_1^k$ and $g_2^k$ acquire the encoded state data according to the new $(2k, 2m)$-$\mathsf{RS}$ scheme and proceed to generate parity chunks from these data chunks. The process of updating the ledger and state data during a downgrade is the reverse of the upgrade process, thus we do not repeat it for brevity.

\section{Analysis on Redundancy and Bandwidth Cost}
\label{sec:analysis}

We analyze the storage redundancy and bandwidth consumption of EC-Chain from a theoretical perspective.

\subsection{Storage Redundancy}
We provide a formal definition of redundancy within a blockchain network.

\begin{definition}[Redundancy]
Let $S$ represent the storage requirement for a node to maintain the ledger and state data. The redundancy of a blockchain network is defined as the total storage costs in all nodes, denoted $k \cdot S$. For full replication, the redundancy is $N \cdot S$, where $N$ is the number of nodes.
\end{definition}

\begin{theorem}
The redundancy of EC-Chain is upper bounded by $2\lceil \log N \rceil \cdot S$ and lower bounded by $2S$, with an expected value of $\lceil \log N \rceil \cdot S$.
\end{theorem}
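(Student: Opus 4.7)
The plan is to reduce the theorem to a counting argument about the number of groups in the EC-Chain configuration, based on the observation that every group contributes the same additive amount to the total redundancy regardless of its size.

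First I would compute the per-group redundancy. For a group $g$ of size $|g|=2^t$, the scheme uses $(2^{t-1},2^{t-1})$-$\mathsf{RS}$, so each node holds one chunk of size $S/k = 2S/|g|$ (the ledger/state content of size $S$ is spread into $k$ data chunks plus $m=k$ parity chunks). Summing over the $|g|$ nodes in the group yields a total per-group storage of exactly $2S$, independent of $t$. Hence, if $\mathcal{K}$ is the total number of groups, the overall network redundancy is $2\mathcal{K}\cdot S$.

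Next I would relate $\mathcal{K}$ to the binary representation of $N$. By the network-setting rule that groups must have sizes that are powers of two (at least $4$) and can only merge pairwise, the stable partition of $N$ nodes is exactly the one given by the binary decomposition $N=\sum_i 2^{t_i}$; each $1$-bit corresponds to one group. Therefore $\mathcal{K}$ equals the Hamming weight $w(N)$ of $N$. The extremes of $w(N)$ over integers with $\lceil \log N\rceil$ bits give $w(N)\ge 1$ (attained when $N$ is a power of two, producing a single group) and $w(N)\le \lceil \log N\rceil$ (attained when every bit is set). Substituting into $2\mathcal{K}\cdot S$ yields the claimed lower bound $2S$ and upper bound $2\lceil\log N\rceil\cdot S$.

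For the expected value, I would model each bit of $N$ as independently $0$ or $1$ with probability $1/2$ (the natural uniform model on $\lceil \log N\rceil$-bit integers used throughout the paper's earlier discussion). Then $\mathbb{E}[w(N)] = \lceil \log N\rceil/2$ by linearity of expectation, giving $\mathbb{E}[2\mathcal{K}\cdot S] = \lceil\log N\rceil \cdot S$. I expect the main obstacle to be not the arithmetic but the justification of the expectation model: strictly speaking the theorem's expected value only makes sense once a distribution on $N$ is fixed, so I would explicitly state the uniform-bit assumption and note that the smallest admissible group size $2^2=4$ introduces a negligible correction affecting only the lowest two bits, which does not alter the asymptotic bound.
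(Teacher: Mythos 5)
Your proposal is correct and follows essentially the same route as the paper's proof: per-group storage is $2S$ independent of group size, the number of groups equals the Hamming weight of $N$'s binary representation, and the bounds and expectation follow from the range $[1,\lceil\log N\rceil]$ of that weight. If anything, you are more careful than the paper, which asserts the expected value $\lceil\log N\rceil\cdot S$ without stating the uniform-bit model or the linearity-of-expectation step that you make explicit.
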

\begin{proof}
For any group $g$ of size $2^t$ ($t\geq 2$), a $(2^{t-1}, 2^{t-1})$-$\mathsf{RS}$ code is used to store blockchain data, resulting in a storage overhead of $(2^{t-1} + 2^{t-1}) \cdot \frac{S}{2^{t-1}} = 2S$. This indicates that the storage overhead per group is $2S$, regardless of the group's size. So the redundancy of EC-Chain depends solely on the number of groups. According to the network maintenance protocol, groups of the same size merge into larger groups. Hence, under stable conditions, there will not be multiple groups of identical sizes, with all group sizes being powers of 2. The group structure in EC-Chain can be interpreted as a binary decomposition of the total number of nodes $N$, where the number of groups corresponds to the number of 1s in the binary representation of $N$. For a randomly distributed $N$, the number of groups falls within $[1, \lceil \log N \rceil]$. Thus, the redundancy of EC-Chain is upper bounded by $2\lceil \log N \rceil \cdot S$, lower bounded by $2S$, and has an expected value of $\lceil \log N \rceil \cdot S$. This represents a significant reduction compared to the $N \cdot S$ redundancy associated with full replication.
\end{proof}

\subsection{Bandwidth Cost}
To analyze the impacts of dynamics, we define the node arrival rate as $\alpha$ ($\alpha \geq 4$ w.l.o.g.) and the departure rate as $\beta (\beta\geq 0)$. The relationship between $\alpha$ and $\beta$ delineates two distinct scenarios of network evolution: rapid growth ($\alpha > \beta$) and dynamic equilibrium ($\alpha = \beta$). The rapid growth scenario signifies a significant increase in the number of nodes, typically observed during periods of blockchain expansion. In contrast, the dynamic equilibrium scenario reflects a stable network with a constant user base. These scenarios cover the primary conditions encountered by blockchains.

\begin{theorem}
The total bandwidth cost of EC-Chain is upper bounded by $(\frac{5}{4}\alpha - 1 + \lfloor \log N \rfloor + \beta)S$ and lower bounded by $(\frac{5}{4}\alpha - \lceil \log \frac{\alpha}{4} \rceil)S$.
\end{theorem}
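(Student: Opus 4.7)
The plan is to decompose the total bandwidth into three additive contributions that mirror the dynamic operations of EC-Chain: (i) the ingress traffic incurred when the $\alpha$ new arrivals self-organise into fresh size-4 groups, (ii) the parity-regeneration traffic induced by the cascade of group upgrades that these arrivals eventually trigger, and (iii) the redistribution traffic arising from the group downgrades caused by the $\beta$ departures. After bounding each contribution separately I will sum to obtain the upper bound and exhibit an optimal arrival pattern (no departures, $\alpha/4$ a power of two, complete merging) to realise the lower bound.

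First, I would analyse the ingress cost. Because the smallest admissible group has size four and $\alpha \geq 4$, the arrivals split into $\alpha/4$ fresh size-4 groups under $(2,2)$-$\mathsf{RS}$. For each such group the two data nodes must externally fetch $S$ bytes in total, while the two parity nodes must further receive the data chunks needed to locally compute their $S/2$-sized parity; a line-by-line accounting of who sends and who receives yields the dominant $\tfrac{5}{4}\alpha\, S$ term that appears in both bounds. Next, I would model the upgrade cascade with the binary-representation viewpoint used in the storage redundancy theorem: adding $\alpha$ to $N$ in binary produces a carry pattern, and every carry corresponds to merging two equal-sized groups into one of doubled size. By the chunk-update rule in Section~\ref{ss:maintenance}, such a merge requires no data-chunk redistribution (the odd/even batch layout aligns the data chunks automatically) and incurs only the parity-regeneration traffic of the new $(2k,2k)$-$\mathsf{RS}$ scheme; a geometric sum over the stages of the cascade shows that each carry \emph{saves} one unit of $S$ compared with leaving the smaller groups unmerged. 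The $\lfloor \log N\rfloor$ term in the upper bound is the maximum depth to which a carry can propagate, whereas the $\lceil \log(\alpha/4)\rceil$ term in the lower bound is the exact number of cascading merges realised when the arrivals fit cleanly into a single new group. Finally, for the upper bound I would add the departure contribution, charging at most $O(S)$ bandwidth per departure for the local chunk reorganisation inside a downgrade, which produces the $+\beta S$ term; the lower bound omits this since the extremal scenario has no departures.

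The main obstacle I expect is making the upgrade-cascade accounting tight. A naive per-merge bound grows linearly with the size of the merging groups, so the proof must exploit the counterbalancing fact that at stage $t$ of the cascade only $\alpha/2^{t+3}$ merges occur, causing the product (merge count)$\times$(per-merge cost) to telescope into a term independent of $t$ and yield only a single logarithmic correction. A secondary subtlety, which explains the gap between the two bounds, is handling the case where $\alpha$ is not a power of two so that the carry chain terminates before reaching the largest bit; in that case the leftover smaller groups do not merge, and the exact count of realised merges lies strictly between $\lceil \log(\alpha/4)\rceil$ and $\lfloor \log N\rfloor + 1$, which is precisely what the $-1$ constant in the upper bound absorbs.
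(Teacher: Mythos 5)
Your high-level decomposition (arrival ingress, upgrade cascade, downgrade traffic) matches the paper's, and your treatment of the $\lfloor\log N\rfloor$ term (maximal carry propagation into pre-existing groups) and the $+\beta S$ term (at most one downgrade per departure, each costing $S$, none in the best case) is essentially the paper's argument. But your accounting of the arrival side contains a sign error that would prevent you from reaching the stated bounds. In the paper, each new node first downloads a full copy ($\alpha S$ in total), and each subsequent merge of two equal-sized new groups \emph{costs} an additional $S$ of traffic; the number of such merges is $\frac{\alpha}{4}-\mathsf{wt}(\frac{\alpha}{4})$, where $\mathsf{wt}$ denotes Hamming weight, giving an arrival cost of $(\frac{5}{4}\alpha - \mathsf{wt}(\frac{\alpha}{4}))S$. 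You instead claim that each carry \emph{saves} one unit of $S$, and accordingly you propose to realise the lower bound via the ``complete merging'' scenario with $\alpha/4$ a power of two. That is exactly backwards: when $\alpha/4$ is a power of two, $\mathsf{wt}(\frac{\alpha}{4})=1$, the merge count $\frac{\alpha}{4}-1$ is maximal, and the resulting cost $(\frac{5}{4}\alpha-1)S$ is the \emph{upper}-bound contribution. The lower bound $(\frac{5}{4}\alpha-\lceil\log\frac{\alpha}{4}\rceil)S$ is attained when the \emph{fewest} merges occur, i.e., when $\mathsf{wt}(\frac{\alpha}{4})$ is maximal (equal to $\lceil\log\frac{\alpha}{4}\rceil$, e.g., $\alpha/4=2^j-1$) and no old group is touched.

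Two further concrete problems. First, you attribute the entire $\frac{5}{4}\alpha S$ to ingress traffic of forming $\alpha/4$ size-4 groups, but the natural per-group accountings give either $4S$ (every node fetches a full copy) or $3S$ (data nodes fetch $S$ in total, parity nodes fetch $2S$), neither of which is $5S$ per group; in the paper $\frac{5}{4}\alpha$ arises as $\alpha$ (full copies) plus $\frac{\alpha}{4}$ (the merge-count term before the Hamming-weight subtraction), so it is not a pure ingress quantity and your ``line-by-line accounting'' would not recover it. Second, $\lceil\log\frac{\alpha}{4}\rceil$ is not ``the exact number of cascading merges realised when the arrivals fit cleanly into a single new group'': merging $\alpha/4$ quartets into one group takes $\frac{\alpha}{4}-1$ merges, not $\log\frac{\alpha}{4}$; in the theorem this quantity is the maximal possible Hamming weight of $\frac{\alpha}{4}$, i.e., the subtrahend in the best case. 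Carried out as written, your plan would therefore not produce the claimed expressions.
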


\begin{proof}
When nodes join EC-Chain, they initially require a complete copy of the blockchain data from a group that can fully recover the data, resulting in a bandwidth cost of $\alpha S$. Subsequently, group upgrades might occur. Newly added nodes also merge into new groups until no two groups have the same size when they become stable. This process involves $\lfloor \frac{\alpha}{4 \times 2} \rfloor + \lfloor \frac{\alpha}{8 \times 2} \rfloor + \cdots = \frac{\alpha}{4} - \mathsf{wt}(\frac{\alpha}{4})$ group upgrades, where $\mathsf{wt}(x)$ represents the Hamming weight of an integer $x$, i.e. the number of 1s in the binary representation of $x$, with $1 \le \mathsf{wt}(x) \le \lceil \log x \rceil$. In the worst case, the addition of one node could trigger upgrades in all old groups until they merge into a single group, leading to $\lfloor \log N \rfloor$ group upgrades and a total cost of $\lfloor \log N \rfloor S$. Hence, the worst-case bandwidth cost of adding a node is $(\frac{5}{4}\alpha - 1 + \lfloor \log N \rfloor)S$. In the best case, no old groups are upgraded, resulting in a bandwidth cost of $(\frac{5}{4}\alpha - \lceil \log \frac{\alpha}{4} \rceil)S$. Despite these costs, group upgrades can occur in parallel.
In EC-Chain, a group will be downgraded and require data recovery if it loses 1/4 of its nodes. In the worst-case scenario, $\beta$ node departures could lead to $\beta$ groups being downgraded, resulting in a bandwidth cost of $\beta S$. In the best case, departures do not trigger any downgrade, so there will be no additional cost. Combining the costs associated with the arrivals and departures of the nodes, the overall bandwidth cost is bounded by $(\frac{5}{4}\alpha - 1 + \lfloor \log N \rfloor + \beta)S$ and bounded by $(\frac{5}{4}\alpha - \lceil \log \frac{\alpha}{4} \rceil)S$.
\end{proof}

\section{Evaluation}
\label{sec:exp}

We conduct experiments to evaluate the performance of EC-Chain and compare it with Ethereum under various configurations. 

\subsection{Implementation and Experiment Setup}
We implement EC-Chain on top of go-ethereum\footnote{https://github.com/ethereum/go-ethereum} (commonly known as geth), the most widely used Ethereum implementation developed in Golang. As illustrated in Fig.~\ref{fig:implemetation}, we integrate the ledger encoding module with 631 lines of code (LOCs) and state encoding module with 967 LOCs into Ethereum's blockchain database. Furthermore, we code a network maintenance module of 429 LOCs and integrate it into Ethereum's p2p module responsible for network management. Besides, the RS coding library is provided by Klaus Post\footnote{https://github.com/klauspost/reedsolomon}, and the DHT library is supplied by Protocol Labs\footnote{https://github.com/libp2p/go-libp2p-kad-dht}.
\begin{figure}[!htbp]
    \centering
    \includegraphics[width=0.8\linewidth]{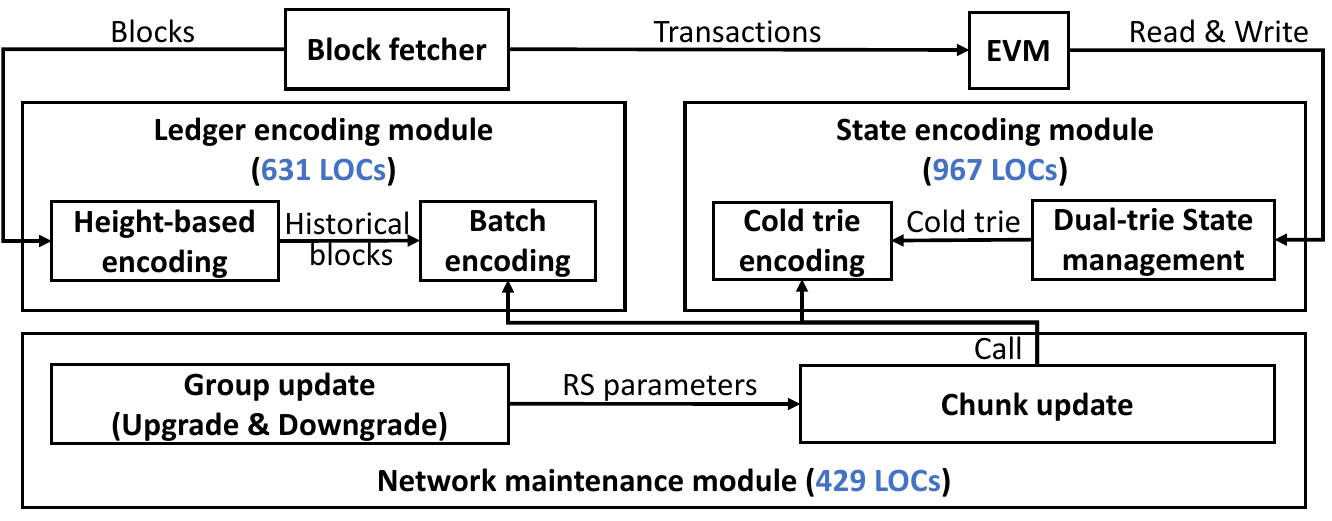}
    \caption{The implementation of EC-Chain}
    \label{fig:implemetation}
\end{figure}

EC-Chain's performance is primarily influenced by: the network size $N$, the recency threshold $\Delta T$, and the frequency threshold $F$. Therefore, our evaluation included testing EC-Chain's performance with varying group sizes and thresholds. To demonstrate the performance improvement brought by EC-Chain, we use go-ethereum as a baseline in our evaluation. We select transactions from the first 4 million blocks of the Ethereum mainnet and replay them during the evaluation. In terms of the running environment, up to 64 blockchain nodes are distributed across 17 regions in 10 countries. Each node is equipped with an 8-Core CPU, 32 GB of memory, and a 512 GB NVMe SSD, all running Ubuntu 22.04 LTS. Each node has a bandwidth of 1 Gbps. Additionally, we run a client in full sync mode to synchronize transactions from the Ethereum mainnet and forward them to the blockchain nodes for transaction replay. 


\subsection{Storage Costs}

First, we evaluate the storage costs of EC-Chain. Fig.~\ref{fig:storage} shows the average storage costs per node in the network after replaying transactions in each block. Fig.~\ref{fig:storage_n} illustrates the impact of $N$ on EC-Chain's storage costs, with thresholds set to $\Delta T=10^4$ and $F=10^{-2}$. The storage per node decreases with the increasing $N$. When $N=64$, at a block height of 4 million, the storage costs per EC-Chain node is 91.8\% lower than that of an Ethereum node. The results also indicate that doubling $N$ nearly halves the storage costs per node. This reduction occurs mainly because data is encoded by $(k, m)$-$\mathsf{RS}$ with $k=m=|g|/2$. 
Fig.~\ref{fig:storage_threshold} depicts the storage costs across various thresholds, with $N$ held constant at 8. The findings reveal that increasing the frequency threshold $F$ and decreasing the recency threshold $\Delta T$ lead to reduced storage costs, as more states are moved to the cold trie for erasure coding. 

\begin{figure}[!t]
\centering
\subfigure[$\Delta T=10^4$, $F=10^{-2}$]{
\label{fig:storage_n}
\includegraphics[width=0.48\linewidth]{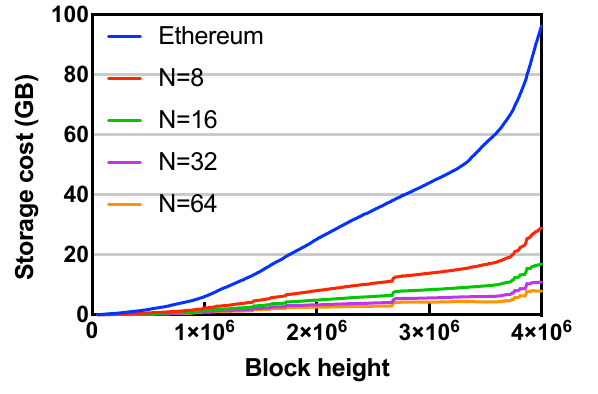}}
\subfigure[$N=8$]{
\label{fig:storage_threshold}
\includegraphics[width=0.48\linewidth]{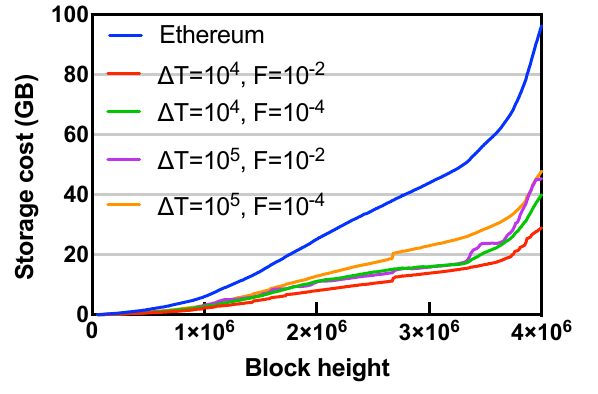}}
\caption{The storage costs per node.}
\label{fig:storage}
\end{figure}

\begin{figure}[!t]
\centering
\subfigure[Group upgrade]{
\label{fig:group_upgrade}
\includegraphics[width=0.48\linewidth]{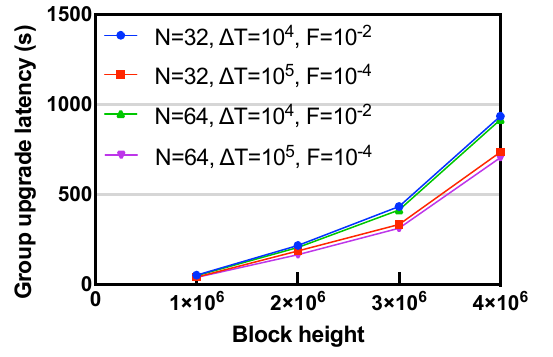}}
\subfigure[Group downgrade]{
\label{fig:group_downgrade}
\includegraphics[width=0.48\linewidth]{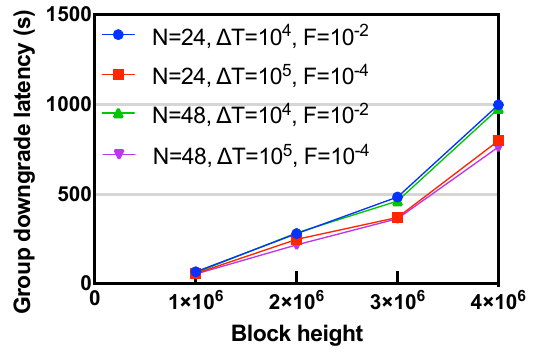}}
\caption{The latency of group upgrade and downgrade.}
\label{fig:maintenance}
\end{figure}

\begin{figure}[!t]
\centering
\subfigure[$\Delta T=10^4$ and $F=10^{-2}$]{
\label{fig:latency_n}
\includegraphics[width=0.46\linewidth]{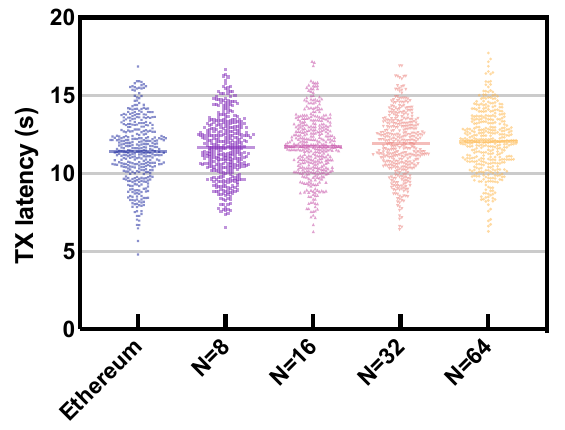}}
\subfigure[$N=8$]{
\label{fig:latency_threshold}
\includegraphics[width=0.46\linewidth]{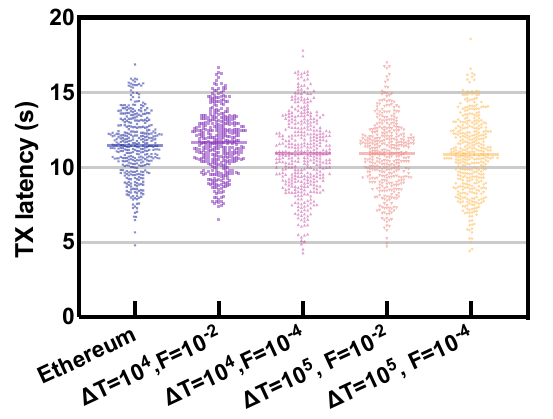}}
\caption{The transaction latency.}
\label{fig:latency}
\end{figure}

\begin{figure}[!t]
\centering
\subfigure[$\Delta T=10^4$ and $F=10^{-2}$]{
\label{fig:tps_n}
\includegraphics[width=0.46\linewidth]{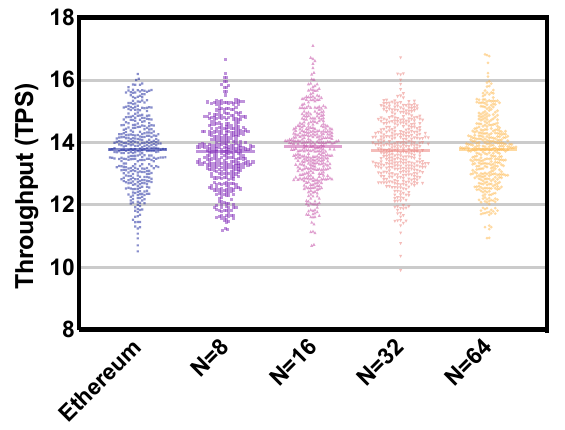}}
\subfigure[$N=8$]{
\label{fig:tps_threshold}
\includegraphics[width=0.46\linewidth]{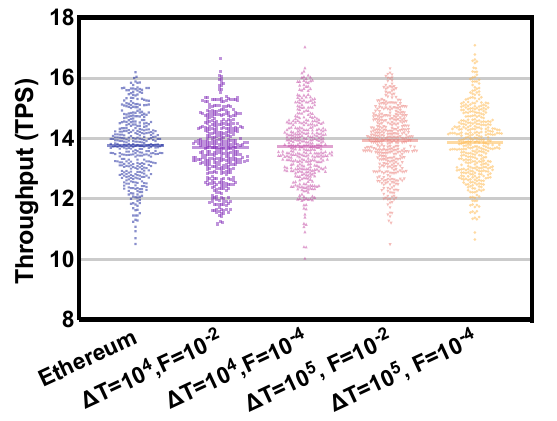}}
\caption{The throughput.}
\label{fig:tps}
\end{figure}

\subsection{Latency of Network Maintenance}

We evaluate the latency of network maintenance as shown in Fig.~\ref{fig:maintenance}. For upgrades, we test the latency of combining two $g^{(16)}$ into one $g^{(32)}$, and two $g^{(32)}$ into one $g^{(64)}$. For downgrades, we examine the performance of downgrading a $g^{(32)}$ (8 nodes leaving, 24 left) into $g^{(16)}$ and $g^{(8)}$, and downgrading $g^{(64)}$ (16 nodes leaving, 48 left) into $g^{(32)}$ and $g^{(16)}$. We also evaluate how varying thresholds impact these processes.
The latency increases as the blockchain expands due to the extended time required for chunk updates caused by the larger volume of ledger and state data. When $\Delta T$ increases and $F$ decreases, a greater number of states remain in the hot trie, leading to reduced latency for both group upgrades and downgrades, as the fully replicated hot trie eliminates the need for updates. Moreover, larger groups exhibit lower latency compared to smaller groups because they benefit from more concurrent data transfers, optimizing the use of available bandwidth.
Additionally, by comparing Fig.~\ref{fig:group_upgrade} and Fig.~\ref{fig:group_downgrade}, it is evident that the latency for upgrades is slightly lower than for downgrades. This arises because, during group upgrades, most data chunks are accessible. In contrast, during downgrades caused by node disparity, there is a higher likelihood of missing data chunks, requiring additional decoding efforts from the remaining parity chunks.

\subsection{Transaction Latency and Throughput}

We measure the transaction latency in second and throughput in transactions per second (TPS), as shown in Fig.~\ref{fig:latency} and Fig.~\ref{fig:tps}. The latency is defined as the time interval from when a client sends a transaction until the transaction is confirmed by the blockchain. 
In Fig.~\ref{fig:latency_n}, the transaction latency between Ethereum and EC-Chain (for varying values of $N$) is analyzed, with EC-Chain's parameters configured to $\Delta T=10^4$ and $F=10^{-2}$. EC-Chain under different $N$ exhibits similar latency to Etheruem, suggesting minimal impact of $N$ on transaction latency.
Figure~\ref{fig:latency_threshold} presents the latency under various thresholds, with $N$ fixed at 8. When EC-Chain's thresholds are set to $\Delta T=10^4$ and $F=10^{-2}$, the latency between EC-Chain and Ethereum is nearly identical. As $F$ decreases or $\Delta T$ increases, a greater number of states are maintained in the hot trie, allowing for faster verification.
Another key performance indicator for a blockchain, throughput, is depicted in Figure \ref{fig:tps}. We also examine this metric under varying network sizes (shown in Figure \ref{fig:tps_n}) and different thresholds (shown in Figure \ref{fig:tps_threshold}). The results demonstrate that in all scenarios, EC-Chain maintains a throughput nearly equal to that of Ethereum, indicating that EC-Chain does not experience a decrease in throughput.




\section{Conclusion}

We propose EC-Chain, a cost-effective storage solution for permissionless blockchain systems that optimizes storage using erasure coding. EC-Chain can minimize the storage costs associated with both ledger data and state data. To reduce the storage costs of state data, EC-Chain introduces a dual-trie state management system, which migrates inactive states to a cold trie, which can be erasure-coded but ensures high data availability. Additionally, we design a network maintenance strategy to achieve adaptability when keeping low data redundancy. 



\bibliographystyle{IEEEtran}
\bibliography{references}

\end{document}